\newcommand{\algrule}[1][.2pt]{\par\vskip.5\baselineskip\hrule height #1\par\vskip.5\baselineskip}
{\bfseries}{\rmfamily}
\newtheorem{mycorollary}{Corollary}{\bfseries}{\rmfamily}
\newtheorem{mylemma}{Lemma}{\bfseries}{\rmfamily}
\newcommand{\as}{\ensuremath {\leftarrow}{\xspace}}
\newcommand{\respond}{\ensuremath {\texttt{Respond}{\xspace}}}
\newcommand{\client}{\ensuremath {\texttt{Client}{\xspace}}}
\newcommand{\cmark}{\ding{51}}%
\newcommand{\xmark}{\ding{55}}%
\newcommand{\slap}{\ensuremath {\texttt{SLAP}{\xspace}}}
\newcommand{\pol}{\ensuremath {\texttt{PoL}{\xspace}}}
\newcommand{\ND}{\ensuremath {\texttt{ND}{\xspace}}}
\newcommand{\AP}{\ensuremath {\texttt{AP}{\xspace}}}
\newcommand{\dac}{\ensuremath {\mathcal{DAC}{\xspace}}}
\newcommand{\gs}{\ensuremath {\mathcal{GS}{\xspace}}}
\newcommand{\tlp}{\ensuremath {\mathcal{TLP}{\xspace}}}
\newcommand{\dbp}{\ensuremath {\mathcal{DBP}{\xspace}}}
\newcommand{\asrandom}{\ensuremath{\stackrel{\$}{\leftarrow}}}
\newcommand{\PSD}{\ensuremath {\texttt{PSD}{\xspace}}}
\def\BibTeX{{\rm B\kern-.05em{\sc i\kern-.025em b}\kern-.08em
    T\kern-.1667em\lower.7ex\hbox{E}\kern-.125emX}}
\begin{document}

\title{SLAP: Secure Location-proof and Anonymous Privacy-preserving Spectrum Access
}

\author{\IEEEauthorblockN{Saleh Darzi, Attila A. Yavuz}
\IEEEauthorblockA{\textit{Department of Computer Science and Engineering}, 
\textit{University of South Florida},
Tampa, FL, USA}
(salehdarzi@usf.edu, attilaayavuz@usf.edu)
}

\maketitle

\begin{abstract}
The rapid advancements in wireless technology have significantly increased the demand for communication resources, leading to the development of Spectrum Access Systems (SAS). However, network regulations require disclosing sensitive user information, such as location coordinates and transmission details, raising critical privacy concerns. Moreover, as a database-driven architecture reliant on user-provided data, SAS necessitates robust location verification to counter identity and location spoofing attacks and remains a primary target for denial-of-service (DoS) attacks. Addressing these security challenges while adhering to regulatory requirements is essential. 

In this paper, we propose $\slap$, a novel framework that ensures location privacy and anonymity during spectrum queries, usage notifications, and location-proof acquisition. Our solution includes an adaptive dual-scenario location verification mechanism with architectural flexibility and a fallback option, along with a counter-DoS approach using time-lock puzzles. We prove the security of $\slap$ and demonstrate its advantages over existing solutions through comprehensive performance evaluations.

\end{abstract}

\begin{IEEEkeywords}
Spectrum Access Systems, Location Privacy, Anonymous Credentials, Location Proof, Counter-DoS
\end{IEEEkeywords}

\section{Introduction}
\label{sec:introduction}
Spectrum Access Systems (SAS) have become the de facto technology for dynamic spectrum allocation, enabling efficient sharing among primary (PU) and secondary users (SU) while ensuring regulatory compliance and interference management. A notable example is the Citizens Broadband Radio Service (CBRS) in the U.S., operating in the 3.5 GHz band for federal and satellite services \cite{agarwal2022survey}. However, SAS introduces significant privacy and security challenges due to its reliance on continuous reporting of user location and transmission details to geo-location databases, raising concerns about user anonymity and privacy \cite{grissa2021anonymous}. The location-based nature of SAS also makes it vulnerable to spoofing, location fraud, and falsified data, increasing the risk of unauthorized spectrum access \cite{nguyen2019spoofing}. Furthermore, its database-driven architecture leaves SAS and Cognitive Radio Networks (CRNs) susceptible to denial-of-service (DoS) attacks, which compromise spectrum availability and system efficiency \cite{jakimoski2008denial}. Despite various solutions targeting privacy protection, location verification, and DoS resistance, existing approaches remain isolated and fail to address these issues comprehensively. In the following, we review relevant efforts related to our work. 


\subsection{Related Work}
\label{subsec:relatedwork} 

{\em Location Privacy and Anonymous Spectrum Access in Database-Driven CRN}:  
Compliance with Federal Communications Commission (FCC) regulations in centralized SAS requires the disclosure of sensitive user information, including precise location coordinates, spectrum channel preferences, usage data, and transmission details, to query spectrum availability. This mandatory reporting raises serious privacy concerns, such as location privacy breaches, identity tracing, and the exposure of behavioral patterns.  
Existing location-privacy schemes often have significant limitations. Many focus solely on SUs, neglecting PUs, where their impact on spectrum information is most critical. Computationally or information-theoretically secure Private Information Retrieval (PIR) methods demand resource-intensive operations or involve extensive communication with multiple non-colluding databases, imposing high computational and communication overhead \cite{grissa2021anonymous, darzi2024privacy, xin2016privacy}. Approaches based on k-anonymity and pseudo-identifiers fail to provide provable security, offering only weak privacy guarantees unless an impractically large k value is used, which is infeasible for large-scale networks with numerous users \cite{zhang2015optimal,zhu2019lightweight}. Similarly, differential privacy-based methods degrade the accuracy of spectrum availability information \cite{ul2022differential}. 
These shortcomings highlight the need for efficient mechanisms that ensure robust security, full anonymity, and strong location privacy against all network entities without compromising system performance and user experience.

{\em Location Proof and Spoofing Attack Resistance in SAS}: 
SAS, viewed as location-based services reliant on real-time user data, depend on the accuracy and integrity of this information for efficient and fair spectrum allocation. However, adversaries can exploit this reliance by impersonating legitimate entities or falsifying location and usage data to manipulate spectrum allocation, leading to spectrum interference, operational disruptions, and economic losses. Existing works addressing location proofs in SAS often fail to comprehensively mitigate broader threats, including location spoofing, distance fraud, mafia attacks, and distance hijacking \cite{nguyen2019spoofing, zeng2014location}.  
Many solutions rely on impractical assumptions, such as the existence of dedicated location-proof servers \cite{li2015privacy}, the inherent honesty of some entities \cite{xin2016privacy}, or the availability of trusted infrastructure like WiFi or cellular access points in all locations. These assumptions are unrealistic, especially in rural or sparsely populated areas where such infrastructure may be absent, limiting the applicability of these methods. Also, most schemes do not safeguard location privacy and anonymity against access points or location servers, leaving a significant gap.  
Thus, there is a pressing need for a practical and robust location verification mechanism in SAS that ensures privacy, anonymity, and resilience against diverse attack scenarios while aligning with the operational constraints of real-world deployments.

{\em DoS countermeasures for SAS and CRN Services}: 
The proliferation of inexpensive devices (e.g., IoT) and the reliance of SAS on geo-location databases have significantly amplified the risk of DoS attacks \cite{jakimoski2008denial}. These attacks overwhelm systems with malicious requests, disrupting spectrum allocation and degrading performance, particularly during spectrum usage notifications and CRN service requests.  
Proposed solutions include intrusion detection systems (IDSs), blockchain, cryptographic techniques like client puzzles, and game-theory-based methods \cite{darzi2024counter}. While AI-based detection excels at identifying attacks, it primarily focuses on detection rather than prevention and requires extensive network-wide knowledge and access to sensitive user traffic—an impractical approach for real-time SAS countermeasures. Similarly, client-puzzle protocols face challenges such as distribution inefficiencies, parallelization vulnerabilities, and excessive overhead on servers and users, limiting their feasibility.  
There is an urgent need for efficient DoS countermeasures tailored to tasks like spectrum usage notifications and CRN service requests, ensuring resilience without imposing significant resource burdens.

\subsection{Our Contributions}\vspace{-1mm}
\label{subsec:contribution}

{\em We developed an efficient framework that innovatively synergies advanced cryptographic techniques to address the complex privacy and security challenges posed by regulatory requirements on SAS such as DoS and spoofing attacks. The proposed scheme, "Secure Location-Proof and Anonymous Privacy-Preserving Spectrum Access ($\slap$)", is designed to meet these challenges effectively.} Key desirable properties of the $\slap$ framework are outlined below: 

\vspace{1mm}
$\bullet$~\noindent{\em \uline{Location Privacy-Preserving and Anonymous Spectrum Access}:}  We enable anonymous queries to geo-location databases while ensuring location privacy and compliance with FCC regulations: {\em (i)} Key operations, including spectrum queries by SUs, database population with PU information, location proof requests from WiFi access points (APs) or nearby devices, and CRN service requests, are executed using delegatable attribute-based anonymous credentials \cite{mir2023practical}. These credentials are built on structure-preserving signatures on equivalence classes with updateable commitments, ensuring unlinkability and untraceability for robust location privacy.  
{\em (ii)} Certified attributes in anonymous credentials containing device-specific information along with location proofs, ensure full anonymity during authentication while significantly improving the quality and reliability of SAS and CRN services.
{\em (iii)} Our comparison demonstrates that $\slap$ achieves spectrum query with significantly lower end-to-end delay, outperforming existing schemes by a wide margin: $17\times$ faster than \cite{darzi2024privacy}, $63\times$ faster than \cite{grissa2019trustsas}, $22.6\times$ faster than \cite{grissa2019location}, and $5.9\times$ faster than \cite{xin2016privacy}. Additionally, $\slap$ requires communication with only a single database and reduces total communication overhead by two orders of magnitude compared to PIR-based schemes that rely on multiple non-colluding databases.


\vspace{1mm}
$\bullet$~\noindent{\em \uline{Adaptive Location Proofs and Attack Resistance for Spectrum Access}:} We propose an adaptive location verification algorithm with dual scenario support and architectural flexibility: 
{\em (i)} \textit{AP-Based Verification:} In areas with accessible APs, the algorithm utilizes signal strength to verify user proximity and generate location proofs using group signatures \cite{tessaro2023revisiting}, ensuring privacy and anonymity. Compared to existing solutions, $\slap$ achieves location proof $2\times$ faster than \cite{li2015privacy} and $4\times$ faster than \cite{xin2016privacy}, while uniquely offering location privacy and full anonymity even against the AP itself. 
{\em (ii)} \textit{Device-Based Verification:} In the absence of APs, the algorithm employs public key distance-bounding protocols \cite{kilincc2016efficient} and delegatable anonymous credentials, enabling users to prove proximity to nearby devices and obtain delegated credentials with certified location proofs as attributes, all while preserving privacy and anonymity. 
This dual-scenario approach enhances resilience by providing a fallback mechanism, mitigating single points of failure, and ensuring robust architectural adaptability. It guarantees strong location privacy, full anonymity, and security against spoofing attacks, regardless of infrastructure availability or environmental conditions.

\vspace{1mm}
$\bullet$~\noindent{\em \uline{Counter-DoS Mechanism for Spectrum Usage}:} We propose a proactive defense mechanism utilizing time-lock puzzles. These puzzles, designed to resist parallelization, are generated using encryption schemes and tailored to device-specific capabilities recorded in user credential attributes, imposing no database storage overhead unlike other counterparts \cite{darzi2024privacy}. 



\vspace{-1.5mm}
\section{Preliminaries}
\label{sec:prelim}
This section outlines the notations, cryptographic preliminaries, and foundational components of our framework. 

\textbf{Notations:}  $|x|$ and $\{0,1\}^k$ signify the bit length of a variable and $k$-bit binary value, respectively. $\oplus$ represents XOR operation. 
$\{x_i\}_{i=1}^{\ell}$ and $\xleftarrow{\$}\mathcal{S}$ denote $(x_1, x_2, ..., x_\ell)$ and random selection from the set $\mathcal{S}$, respectively. Let $\mathbb{G}_1, \mathbb{G}_2$, and $\mathbb{G}_T$ be prime-order groups with order $p$, and let $e: \mathbb{G}_1 \times \mathbb{G}_2 \rightarrow \mathbb{G}_T$ denote a bilinear map satisfying bilinearity and non-triviality. 
$\textbf{m}[i]$ refers to the $i$-th element of the vector $\textbf{m}$ and $h(.)$ denotes a cryptographically secure hash function. $sk$ and $pk$ are secret and public keys, respectively. \vspace{+1mm}


\textbf{Delegatable Anonymous Credentials ($\dac$):} We use an attribute-based $\dac$ \cite{mir2023practical} for anonymous authentication, built upon structure-preserving signatures on equivalence classes of updatable commitments (SPSEQ-UC). The main algorithms are outlined below; additional details are available in  \cite{mir2023practical}:
\begin{itemize}[leftmargin=*]
\setlength{\itemsep}{1pt} 
    \item[-] $(pp, sk_{RI}, pk_{RI}) \as Setup(1^\lambda, 1^t, 1^\eta)$: Given the security parameter $\lambda$, an upper bound $t$ for the set commitment scheme's maximum cardinality, and a length parameter $\eta > 1$, it produces the system's public parameters $pp$ along with a signing key $sk_{RI}$ and a public key $pk_{RI}$ for each level $i \in [\eta]$ associated with the root issuer (RI), where $pp$ is implicitly provided as input to all subsequent algorithms.
    \item[-] $(pk, sk) \as KeyGen(pp)$: Given $pp$, it generates the user's key pairs ($sk, pk$), where $pk$ is the initial pseudonym.
    \item[-] $(nym, aux) \as NymGen(pk)$: Given a user’s public key $pk$, the algorithm generates a pseudonym $nym$ and auxiliary information $aux$ (randomness) for its usage.
    \item[-] $CreateCred(L', A, sk_{RI}) \leftrightarrow$ $GetCred(pk_{RI}, sk_u, A)$ $\rightarrow (cred, (\overrightarrow{C},\overrightarrow{O}), dk_{L'})$: An interactive algorithm between RI and a user identified by $nym_u$. Given $pp$, the RI’s public key $pk_{RI}$ and attribute set $A$, RI generates a delegatable root credential for the user via the SPSEQ-UC signature. This credential is rooted at $pk_{IR}$ and created for a set commitment $C$ certifying the attribute set $A$. The user receives the credential $cred$, the opening information $O$, and a delegatable key $dk_{L'}$ for level $L'$.
    \item[-] ${IssueCred(pk_{RI}, dk_{L'}, sk_u, cred_u, A_l, L'') \leftrightarrow ReceiveCr}$ ${ed(pk_{RI}, sk_r, A_l) \rightarrow (cred_r, dk'_{L''})}$: This interactive algo- rithm involves a delegator ($nym_i$) and a delegatee ($nym_r$). The delegator uses inputs including $pp$, $pk_{RI}$, attribute set $A_l$, delegation key $dk_{L'}$, secret key $sk_i$, credential $cred_i$, and auxiliary information $aux_i$ to generate a new credential $cred_r$. The delegatee, using their secret key $sk_r$ and $pk_{RI}$, receives $cred_r$ with an extended attribute set $A' = (A, A_l)$ and a delegation level $L''$ satisfying $L'' \leq L'$. The new credential includes an updated delegation key $dk'_{L''}$, allowing further delegation if permitted.     
    %
    \item[-] ${CredProve(pk_{RI}, sk_{p}, nym_p, aux_p, cred_p, D) \leftrightarrow CredVe}$ ${rify(pk_{RI}, nym_p, D) \rightarrow \{0, 1\}}$: This interactive protocol allows a credential holder to anonymously prove ownership of their credential to a verifier. The prover, identified by pseudonym $nym_p$, uses their secret key $sk_p$, auxiliary information $aux_p$, and credential $cred_p$ to generate a proof validating $cred_p$ with respect to a disclosed attribute set $D$. The verifier, using the RI's public key $pk_{RI}$ and the prover's pseudonym $nym_p$, verifies the proof against the disclosed attributes. If the proof is valid, the verifier outputs $1$; otherwise, it outputs $0$.    
\end{itemize}


\textbf{Distance Bounding Protocol ($\dbp$):} 
A $\dbp$ verifies the physical proximity of two network entities by measuring message transmission times during a rapid challenge-response exchange. We utilize a public key-based $\dbp$ \cite{kilincc2016efficient} built on a one-pass authenticated key agreement (AKA) protocol using the nonce-Diffie Hellman scheme to establish a session key between the prover ($P$) and verifier ($V$). This is further combined with a symmetric $\dbp$ \cite{vaudenay2015private} operating on the session key. The $\dbp$ comprises the following algorithms: 
\begin{itemize}[leftmargin=*]
\setlength{\itemsep}{1pt} 
    \item[-] ${ss \gets AKA(sk, pk, pk')}$: $P$ and $V$ derive the session key $ss$ using their own key pair and the other's public key $pk'$.
    \item[-] ${\{0,1\} \gets Sym\dbp(ss, \textit{th})}$: An interactive algorithm between $P$ and $V$ to verify proximity, given a distance threshold $\textit{th}$ and session key $ss$. {\em (1)} \textit{Initialization Phase:} V selects $m \in \{0,1\}^{2n}$, sends it to $P$. $P$ computes $a = ss \oplus m$. {\em (2)} \textit{Rapid Bit Exchange Phase (Time-Critical):} $V$ sends challenges ($c_i \in \{0,1\}$) to $P$, who computes responses ($r_i = a_{2i+c_i-1}$) and returns them. $V$ measures round-trip times ($timer_i$) over $n$ rounds.  
    {\em (3)} \textit{Authentication Phase:} V verifies proximity using $a = ss \oplus m$, the round-trip times, the allowed delay, and the speed of light. It checks ${timer_i \leq 2\times \textit{th}}$ and $r_i = a_{2i+c_i-1}$. If the prover is within $\textit{th}$, the algorithm outputs $1$; otherwise, $0$.
    %
\end{itemize}




\textbf{Group Signature ($\gs$):} 
A $\gs$ allows a group member to anonymously sign a message on behalf of the group \cite{yavuz2023beyond}. We adopt a variant of the BBS group signature scheme \cite{tessaro2023revisiting}, characterized by short signatures, provable security, and high efficiency, comprising the following algorithms:
\begin{itemize}[leftmargin=*]
\setlength{\itemsep}{1pt} 
    \item[-] ${pp_G \as BBS.Setup(1^\lambda)}$: Given the security parameter $\lambda$, it runs group parameter generation algorithm $GGen(.)$ and outputs ${(p, \mathbb{G}_1, \mathbb{G}_2, \mathbb{G}_T, e(.))}$. Then, it obtains ${g_1 \asrandom \mathbb{G}^*_1}$, ${g_2 \asrandom \mathbb{G}^*_2}$, and ${\textbf{h}_1 \asrandom \mathbb{G}^{\ell}_1}$, and returns the public parameters ${pp_G \as (p, g_1, \textbf{h}_1, g_2, \mathbb{G}_1, \mathbb{G}_2, \mathbb{G}_T, e(.))}$.
    \item[-] ${(sk, GK) \as BBS.KeyGen(pp_G)}$: Given $pp_G$, it computes ${x \asrandom \mathbb{Z}_p, X_2 \as {g_2}^x}$, and outputs (${sk \as x, GK \as X_2}$).
    \item[-] ${\sigma \as BBS.Sign(sk=x, \textbf{m})}$: Given the message $\textbf{m}$ and the secret key $sk$, it outputs the group signature ${\sigma = (A, \overline{e})}$ by performing ${C \as g_1 \Pi_i \textbf{h}_1[i]^{\textbf{m}[i]}}$, ${\overline{e} \asrandom \mathcal{D}_{\overline{e}}}$, and ${A \as C^{\frac{1}{x+\overline{e}}}}$.
    \item[-] ${\{0,1\} \as BBS.Verify(GK, \textbf{m}, \sigma=(A,\overline{e}))}$: On input the message $\textbf{m}$, signature $\sigma$, and the group public key $GK$, it checks ${C \as g_1 \Pi_i \textbf{h}_1[i]^{\textbf{m}[i]}}$ and returns $1$ if ${e(A, {g_2}^{\overline{e}}.vk) = e(C, g_2)}$; otherwise, returns $0$.
    %
\end{itemize}


\vspace{+1mm}
\textbf{Time-Lock Puzzle ($\tlp$):} Introduced by Rivest et al. \cite{rivest1996time}, $\tlp$ encrypts messages decryptable only after a set time. We adopt the RSA-based TLP \cite{jerschow2010offline}, leveraging non-parallelizable repeated squaring. Unlike hash-based puzzles, this prevents acceleration via multiple machines. The TLP includes: 
\begin{itemize}[leftmargin=*]
\setlength{\itemsep}{1pt} 
    \item[-] ${\Pi \gets Puzzle.Gen(1^\lambda, \kappa)}$: Given the security parameter $\lambda$, this algorithm follows the same procedure as the RSA key generation \cite{rivest1996time}, resulting in a private key $d$ and its modular inverse ${e = d^{-1} \pmod {\phi(n)}}$. The difficulty $\kappa$ is set as the number of modular squarings required, determined by $\kappa = T \cdot S$, where $S$ is the squarings-per-second rate of a reference machine and $T$ is the desired solving time. The value ${r = 2^\kappa \pmod {\phi(n)}}$ is computed, followed by the public exponent ${\Tilde{e} = 2^\kappa + \phi(n) - r + e}$, where ${z = \phi(n) - r + e}$. The lower bits of $\Tilde{e}$ are composed of $z$, prefixed by a sequence of 0-bits and a leading 1-bit. Finally, the algorithm outputs the $pk \as {\Pi = (n, \Tilde{e})}$ and the $sk\as {\psi = (n, d)}$, with the public key efficiently represented as ${(n, \kappa, z)}$.     
    \item[-] ${\psi \as Puzzle.Sol(m, (n, \Tilde{e}))}$: Given ${\Pi = (n, \Tilde{e})}$ and the message ${0 < m < n}$ chosen by the puzzle solver, it produces ${c = c_1 . c_2 \pmod n}$ where ${c_1 = m^{2^\kappa} \pmod n}$ and ${c_2 = m^z \pmod n}$. Then, it sets the solution as ${\psi = (m, c)}$.
    \item[-] ${\{0, 1\} \gets Sol.Verify(d, \psi)}$: Using the secret key $d$ and solution $\psi$, verify the correctness of ${c^d \pmod n = m}$ and return $1$ if valid; otherwise, return $0$.
\end{itemize}


\section{Framework Architecture and Model}
\label{sec:systemmodel}

\textbf{System Model:} Our framework model comprises five key entities: 1) {\em Federal Communications Commission (FCC):} The central authority governing the SAS, responsible for establishing system parameters and enforcing regulatory compliance. 2) {\em Private Spectrum Databases (PSDs):} These encompass multiple geo-location spectrum databases \cite{agarwal2022survey, grissa2021anonymous} that provide real-time spectrum availability data. PSDs operate in adherence to FCC regulations, ensuring synchronization and consistency. 3) {\em Users:} This group includes both primary users (PUs) and secondary users (SUs) equipped with various devices (e.g., laptops, IoT, smartphones). PUs supply spectrum usage data to PSDs, while SUs query these databases for spectrum availability and CRN services. Additionally, a Nearby Device (ND) refers to any verified user within proximity. 4) {\em Servers:} These are diverse network service providers (e.g., CRN, web, cloud servers) that clients access for specific services. 5) {\em Access Points (APs):} Existing WiFi access points or cellular network towers in the area equipped with synchronized clocks.

\textbf{Threat Model and Security Objectives:} 
Our threat model addresses various cybersecurity attacks, focusing on privacy, anonymity, and location spoofing: 
{\em (i)} Users location privacy and real identity are under threat in all stages of spectrum access and CRN services due to FCC's mandated requirement for sharing detailed coordinates, transmission data, and spectrum information.  In this model, PSDs handle query responses, and CRN servers provide network services, operating as honest-but-curious entities—fulfilling their roles while attempting to infer users’ location, identity, and personal information. 
{\em (ii)} Users are required to provide location proofs but may act maliciously to exploit spectrum channels and services or fall victim to spoofing or compromise. Providing incorrect locations could enable access to occupied channels or unauthorized services. Potential attacks include relay attacks, distance fraud, mafia fraud, and distance hijacking \cite{zeng2014location}.  
{\em (iii)} During spectrum usage notifications, channel access, or CRN service requests, users may launch DoS attacks targeting PSDs or CRN servers.

Given the system and threat models, $\slap$ aims to achieve the following security objectives: \\
$\bullet$ \underline{\textit{Client Privacy and Anonymity:}} User location coordinates, device specifications, and personal identity remain confidential and anonymous during spectrum access, usage notifications, and CRN services, safeguarding against PSDs, CRN servers, and external attackers. \\
$\bullet$ \underline{\textit{Location Verification and Attack Resistance:}} Users are restricted from accessing spectrum data or CRN services outside their verified location. The system is resilient to distance fraud, mafia fraud, and distance hijacking, ensuring only legitimate users at authenticated locations can access services. \\
$\bullet$ \underline{\textit{Denial-of-Service Resistance:}} Spectrum channels and CRN services are safeguarded against DoS attacks, whether from users or external sources, ensuring uninterrupted and reliable service availability. 
\vspace{-1mm}
\section{The Proposed Scheme: $\slap$}
\label{sec:scheme}

\subsection{SLAP Framework Architecture and Initial Setup}
\label{subsec:initialsetup}
\vspace{-1.5mm}

Geolocation databases store frequency information and synchronize as mandated by the FCC \cite{agarwal2022survey}. APs within a region function as a group, each holding a pair of secret key $sk_{AP}$ and the group verification key $GK$, generated by the FCC using ${(sk, GK) \gets \textit{BBS.KeyGen}(param_G)}$. To estimate a user’s physical distance, an AP performs signal strength analysis and round-trip time (RTT) measurements. Using the received signal strength (RSS), RTT, and environmental parameters ($env_{params}$), the algorithm $\Delta \gets \textit{ProxVerif}(RSS, RTT, env_{params})$ computes and outputs the estimated physical distance of the user. 
The FCC acts as the root issuer for credentials in the system. For a set of attributes $A$ associated with a user's device (e.g., device ID, type), the FCC issues Level 1 root credentials to all registered users (PUs and SUs) using the algorithm $CreateCred(L', A, sk_{FCC})$. Each user, identified by the pseudonym $nym_u$, obtains their credential via $GetCred(pk_{FCC}, sk_u, A) \rightarrow (cred_u, (\overrightarrow{C}, \overrightarrow{O}), dk_{L'})$. The credential $cred_u$ consists of a set commitment $\overrightarrow{C}$ over attributes $A$, rooted in the FCC's public key $pk_{FCC}$, the corresponding opening information $\overrightarrow{O}$, and a delegation key $dk_{L'}$ enabling delegation up to level $L'$. With this credential, the user can renew their pseudonym $nym_u$ or delegate their credentials to another user by switching to a new public key and optionally extending the attribute set to $A'$. Demonstrating possession of the credential involves the user proving ownership of the secret key $sk_u$ and generating a randomized signature over the required attributes.

\vspace{-3mm}
\subsection{SLAP Framework Main Operations}
\label{subsec:mainoperations}\vspace{-1mm}
The flow of the $\slap$ framework, depicted in Fig. \ref{fig:mainoperation}, comprises three main phases outlined as follows:

\begin{figure}
  \includegraphics[scale=0.44]{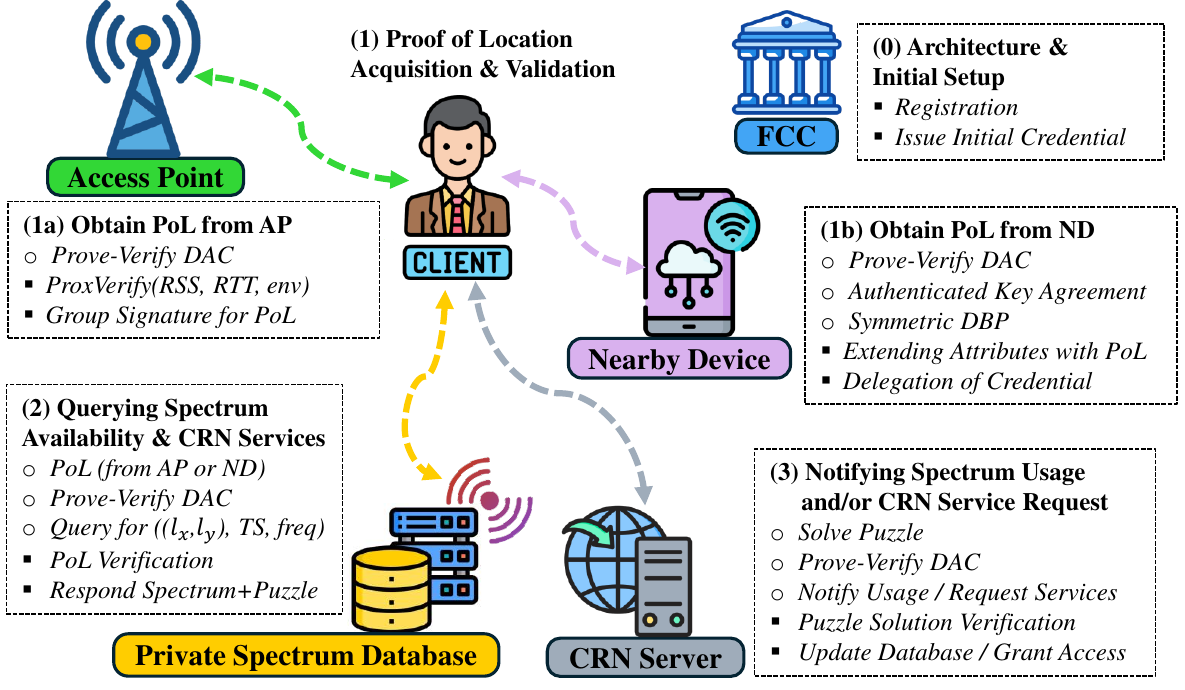} 
  \caption{$\slap$ Main Operations}
  \label{fig:mainoperation} 
\end{figure}

\subsubsection{\textbf{Proof of Location Acquisition and Validation}}  
In this phase, the user obtains a valid proof of location ($\pol$) for a specified time and geographic area, with the process tailored to two complementary scenarios: densely populated areas with robust infrastructure and rural regions with limited resources. 

{\em (i)} When an AP is within the user's proximity, the user requests a $\pol$ from the AP with the strongest signal, as outlined in Algorithm \ref{Alg:PoLAP}. Using anonymous credentials ($nym_c, cred_c$), the user specifies attributes $D$, timestamp ($TS$), and location coordinates ($l_x, l_y$), and verifies their credentials with the FCC's public key (Steps 1–3). Upon successful verification (Step 4), the AP evaluates proximity using signal strength and RTT measurements (Step 5). If proximity is validated (Step 6), the AP generates a group signature on the user's location, timestamp, and credentials and transmits it to the user (Steps 7–9). The user then verifies the group signature and accepts it as valid proof of location (Steps 10–11). 

\begin{algorithm}[ht!]
	\small
	\caption{${\Phi \as \pol.\AP(cred_c, (l_x,l_y), TS)}$}\label{Alg:PoLAP}
	\hspace{5pt}
	\begin{algorithmic}[1]
        \Statex \vspace{-1mm}$\textbf{Client}$:    
        \State Request $\pol$ from AP
        \State Set $D \as ((l_x, l_y), TS)$
        \State Perform $CredProve(pk_{FCC}, sk_{c}, nym_c, aux_c, cred_c, D)$ 
    \algrule
		\Statex \vspace{-1mm}$\textbf{Access Point}$:
            \If{$1 \as CredVerify(pk_{FCC}, nym_c, D)$}
            \State $\Delta_c \as \textit{ProxVerif}(RSS, RTT, env_{params})$
            \If{$(l_x, l_y) \in \Delta_c$}
            \State Set $\textbf{m}\as \{D, nym_c, Cred_c\}$
            \State $\sigma_{AP} \as BBS.Sign(sk_{AP}, \textbf{m})$
            \State Send $(\textbf{m},\sigma_{AP})$ to the Client.
            \EndIf
            \EndIf
	\algrule
	    \Statex \vspace{-1mm}$\textbf{Client}$:    
            \If{$1 \as BBS.Verify(vk, \textbf{m}, \sigma_{AP})$}
            \State \Return $\Phi \as (\sigma_{AP}, \textbf{m} = (l_x, l_y, TS, nym_c, Cred_c))$
            \EndIf
        \end{algorithmic}
\end{algorithm} 
\setlength{\textfloatsep}{0pt}

{\em (ii)} In sparsely populated areas lacking WiFi APs or cellular towers, this scenario provides a fallback for obtaining location proof and anonymous credentials from nearby devices (NDs), as detailed in Algorithm \ref{Alg:PoLND}. The client broadcasts a $\pol$ request to NDs for the current time and location (Step 1). Upon receiving responses, the client verifies their credentials with the ND using the FCC's public key (Steps 2–3). If valid, the ND establishes a secret session key $ss$ via an interactive authenticated key agreement and performs a symmetric $\dbp$ to verify the client's proximity within a threshold $\textit{th}$ (Steps 4–6). Once confirmed, the ND includes the client's location ($l_x, l_y$) and $TS$ in its attributes and anonymously delegates a credential to the client with limited delegation capabilities (Steps 7–10). Using the FCC's public key and their own secret key, the client receives the delegated credential and location proof, certified within the extended attributes (Steps 11–13).

\vspace{-1.5mm}
\begin{algorithm}[ht!]
	\small
	\caption{${(cred'_c, A') \as \pol.\ND(cred_c, (l_x, l_y), TS)}$}\label{Alg:PoLND}
	\hspace{5pt}
	\begin{algorithmic}[1]
        \Statex \vspace{-1mm}$\textbf{Client}$:    
        \State Request $\pol$ and a delegated credential $cred'_c$ from an ND
        \State Set $D \as ((l_x, l_y), TS)$
        \State Perform $CredProve(pk_{FCC}, sk_{c}, nym_c, aux_c, cred_c, D)$
    \algrule
	\Statex \vspace{-1mm}$\textbf{Nearby Device}$:
        \If{$1 \as CredVerify(pk_{FCC}, nym_c, D)$}
        \State Perform $ss \gets AKA(sk_{ND}, pk_{ND}, pk_c)$
        \If{$1 \as Sym\dbp(ss, \textit{th})$ and $(l_x, l_y) \in \textit{th}$}
        \State Set the new extended attributes as $A_l \as ((l_x, l_y), TS)$ 
        \State Set the new delegatable key as $dk'_{L''} := \perp$
        \State $IssueCred(pk_{FCC}, dk_{L'}, sk_{ND}, cred_{ND}, A_l, L'')$ 
        \State Send $(cred'_c, dk'_{L''})$ to the client.
        \EndIf
        \EndIf
	\algrule
	    \Statex \vspace{-1mm}$\textbf{Client}$:    
            \State $(cred'_c, dk'_{L''}) \as ReceiveCred(pk_{FCC}, sk_c, A_l)$
            \State Set $A' \as (A, \Phi)$ where $\Phi \as A_l$
            \State \Return ($cred'_c, A'$)
        \end{algorithmic}
\end{algorithm} \vspace{-3mm}
\setlength{\textfloatsep}{0pt}

\subsubsection{\textbf{Querying Spectrum Availability and CRN Services}} Algorithm \ref{Alg:query} details the process for querying spectrum availability, reporting spectrum usage, and accessing CRN services, primarily focusing on SUs as clients. The procedure for primary users PUs populating the database mirrors the process for querying $\PSD$s. Given the client’s location coordinates $(l_x, l_y)$ and the current timestamp $TS$, the process begins with obtaining a valid $\pol$, either from an AP or nearby devices.  
In areas with sufficient infrastructure, the client retrieves the proof from an AP using Algorithm \ref{Alg:PoLAP} (Step 1) and then proves their credentials to a $\PSD$ while querying for spectrum availability or CRN services (Steps 2–4). In poorly-infrastructured areas, the client obtains proof of location and delegated credentials from an ND using Algorithm \ref{Alg:PoLND} (Step 5). The delegated credential, containing the proof of location as an extended attribute, allows the client to anonymously prove their credentials to the $\PSD$ and submit queries for spectrum availability or CRN services (Steps 6–8). 
Notably, clients can precompute multiple credentials offline for future use, enhancing efficiency and flexibility.


Upon receiving a query, the $\PSD$ validates the credentials and proof of location. For AP-based location proofs, the $\PSD$ verifies the group signature, while for ND-based proofs, it checks the certified attributes, including the location proof, via the underlying signature verification (Steps 9–10). Based on the request for spectrum availability or CRN services, the $\PSD$ generates a puzzle linked to the target server’s public key (Steps 11, 14) and responds accordingly. 
While puzzle generation is included in the algorithm, $\PSD$s typically precompute puzzles with varying difficulty levels offline, similar to spectrum data. The difficulty is determined based on the risk of DoS attacks and the server's resource capacity to manage responses. Using the device details embedded in credential attributes, the $\PSD$ distributes the tailored puzzles accordingly. Notably, the online phase of $\slap$ only involves proving and verifying anonymous credentials during the query process, as location proof acquisition can be completed offline in advance.

\begin{algorithm}[ht!]
	\small
	\caption{$\slap$ Scheme}\label{Alg:query}
	\hspace{5pt}
	\begin{algorithmic}[1]
		\Statex \vspace{-1mm} $\textbf{Client}$: 
		\Statex $\underline{\rho_c \as Client.Query(cred_c, \Phi, (l_x,l_y), TS,~\textit{freq})}$: \vspace{+2mm}
            \Statex Give $(l_x, l_y)$ and $TS$, client request $\pol$: 
            \If{${\Phi \as \client.\pol.\AP(cred_c, (l_x,l_y), TS)}$}
            \State Set $D \as ((l_x, l_y), TS, \Phi)$
            \State Perform $CredProve(pk_{FCC}, sk_{c}, nym_c, aux_c, cred_c, D)$
            \State Query a $\PSD$ for $\rho_c \as ((l_x,l_y), TS,~\textit{freq})$     
            \EndIf
            \State \textbf{elseif} ${(cred'_c, \Phi) \as \client.\pol.\ND(cred_c, (l_x, l_y), TS)}$ \textbf{then}
            \State ~~~ Given $cred_c \as cred'_c$ with $A \as (A, \Phi)$ for $((l_x,l_y), TS)$
            \State ~~~ Perform $CredProve(pk_{FCC}, sk_{c}, nym_c, aux_c, cred_c, A')$
            \State ~~~ Query a $\PSD$ for $\rho_c \as ((l_x,l_y), TS,~\textit{freq})$
    \algrule
		\Statex \vspace{-1mm} $\textbf{Private Spectrum Database}$:
            \Statex $\underline{\rho_{\PSD} \as \PSD.\respond((l_x,l_y), TS,~\textit{freq})}$: \vspace{+2mm}
            \If{$1 \as CredVerify(pk_{FCC}, nym_c, D)$}
            \EndIf
            \If{${1 \as BBS.Verify(vk, \textbf{m}=(D,num_c, Cred_c), \sigma_{AP})}$}
            \State Set $\Pi \gets Puzzle.Gen(1^\lambda, \kappa)$ accordingly
            \State \Return $\rho_{\PSD} \as (\beta, \Pi)$ for $((l_x,l_y), TS)$
            \EndIf
            \State \textbf{elseif} $1 \as CredVerify(pk_{FCC}, nym_c, A')$ \textbf{then}
            \State ~~~ Set $\Pi \gets Puzzle.Gen(1^\lambda, \kappa)$ accordingly
            \State ~~~ \Return $\rho_{\PSD} \as (\beta, \Pi)$ for $((l_x,l_y), TS)$
    \algrule
        \Statex \vspace{-1mm} $\textbf{Client}$: Notifying spectrum usage to $\PSD$s or accessing services: 
	\For{Spectrum usage data or service requests as $m$}
        \State Given $\Pi \as (n, \Tilde{e})$ for the $\PSD$ or the target server
        \State Perform ${\psi \as Puzzle.Sol(m, (n, \Tilde{e}))}$
        \State Perform $CredProve(pk_{FCC}, sk_{c}, nym_c, aux_c, cred_c, D)$
        \State Sends $(m, \psi)$ to the $\PSD$ or the CRN server
        \EndFor
    \algrule
		\Statex $\textbf{Private Spectrum Database/CRN Server}$:
            \If{$1 \as CredVerify(pk_{FCC}, nym_c, D)$}
            \If{$1 \as Sol.Verify(m, c, d)$}            
            \State $\PSD$/CRN Server \Return $1$, and update DB or grant access
            \EndIf
            \EndIf
        \end{algorithmic}
\end{algorithm} \vspace{-2mm}
\setlength{\textfloatsep}{0pt}

\subsubsection{\textbf{Notifying Spectrum Usage and/or CRN Service Request}}  
To report spectrum usage data or access CRN services, users must solve the puzzle previously obtained, tied to the target server's public key. Given a message $m$, representing spectrum usage data or an access request, the user computes the puzzle solution via repeated squaring and submits it alongside proof of their anonymous credentials to the $\PSD$ or CRN server (Steps 16–20). Upon receiving the message and solution, the server validates the anonymous credentials (Step 21) and verifies the puzzle solution (Step 22). If both are verified, the $\PSD$ updates its database, or the CRN server grants access to the requested resources. Unlike other schemes, spectrum usage notifications also leverage anonymous credentials with attributes, improving frequency information quality while adhering to FCC coexistence requirements.

\section{Security Analysis}
\label{sec:security}
We provide security proofs addressing the threat model:
\vspace{-2mm}

\begin{mylemma}
\label{lem:anonymity&privacy}
$\slap$ ensures anonymous user authentication by leveraging the strong anonymity, soundness, and unforgeability properties of the ZKPoK and SPSEQ-UC signature schemes.
\end{mylemma}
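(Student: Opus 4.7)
The plan is to prove the lemma via a sequence of game hops reducing the authentication security of \slap{} to the established security properties of the underlying building blocks defined in Section~\ref{sec:prelim}. First, I would formalize the anonymity experiment: an adversary $\mathcal{A}$ (playing the role of a corrupted \PSD, \AP, or \ND) interacts with two honest clients who register with the FCC, obtain credentials via $GetCred$, and participate in \pol{} acquisition and query protocols from Algorithms~\ref{Alg:PoLAP}--\ref{Alg:query}; $\mathcal{A}$ must distinguish which client produced a given transcript. I would then show via hybrids that every transcript---comprising pseudonyms $nym_c$ generated by $NymGen$, randomized SPSEQ-UC signatures output by $CredProve$, and attribute disclosures over $D$---is computationally indistinguishable from one generated by an independent honest user.

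The first hop replaces each $nym_c$ produced by the challenger with a freshly sampled pseudonym, which is justified by the rerandomization property of the pseudonym derivation in the SPSEQ-UC-based \dac{} construction; no distinguishing advantage accumulates because $aux_c$ is never revealed. The second hop replaces the ZKPoK transcripts inside $CredProve$ with simulated proofs, justified by the zero-knowledge property of the ZKPoK, which ensures that proofs reveal nothing beyond the validity of the disclosed attributes $D$. The third hop rerandomizes the set commitments $\overrightarrow{C}$ and the opening $\overrightarrow{O}$ using the class-hiding property of SPSEQ-UC, so the randomized signature delivered to the verifier is distributed identically to one on an equivalent commitment sampled independently. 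After these hops, the transcript is independent of the challenge bit, establishing anonymity.

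For soundness and unforgeability, I would argue contrapositively: suppose $\mathcal{A}$ produces an accepting transcript under pseudonym $nym^*$ without holding a valid credential rooted at $pk_{FCC}$. By the extractability of the ZKPoK, there exists an extractor that, given $\mathcal{A}$'s random tape and the successful transcript, outputs either (i) a secret key $sk^*$ binding to $nym^*$ together with a valid SPSEQ-UC signature over attributes satisfying $D$, which means $\mathcal{A}$ did possess a legitimate credential and contradicts the assumption, or (ii) a forgery against SPSEQ-UC, contradicting the EUF-CMA unforgeability established in \cite{mir2023practical}. Delegated credentials issued by NDs via $IssueCred$ are handled symmetrically, since the \dac{} scheme guarantees that delegation preserves unforgeability and the setting $dk'_{L''} := \perp$ in Algorithm~\ref{Alg:PoLND} blocks onward delegation.

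The main obstacle will be carefully specifying the oracle interfaces so that the anonymity game correctly models a multi-role adversary---one who may simultaneously act as an \AP, an \ND, and a \PSD---and ensuring that the reductions remain tight when credentials are composed with location attributes $\Phi$ that themselves originate from a delegated credential issued by another (possibly challenge-related) honest ND. This cross-entity attribute binding is the subtle part: a careless reduction could leak the delegating ND's identity through the extended attribute set $A' = (A, \Phi)$, so the hybrid must additionally invoke the anonymity of the ND's credential and the class-hiding of the delegation token before $\Phi$ is revealed as an attribute in the client's subsequent query to the \PSD.
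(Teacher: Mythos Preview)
Your proposal is correct and considerably more rigorous than the paper's own argument, so the difference is worth noting. The paper does not carry out any game-based reduction: its proof is a one-paragraph sketch that simply enumerates the three privacy notions of the SPSEQ-UC scheme from \cite{mir2023practical}---\emph{origin-hiding}, \emph{conversion-privacy}, and \emph{derivation-privacy}---together with the knowledge soundness of the ZKPoK and the DDH assumption, and asserts that these jointly yield indistinguishability of credential showings across issuance, delegation, and presentation. There is no explicit anonymity experiment, no hybrid sequence, and no extractor argument for soundness; the unforgeability claim in the lemma statement is not even addressed separately in the paper's proof text. Your decomposition into (i) pseudonym rerandomization, (ii) ZK simulation, and (iii) class-hiding of the commitment/signature pair is effectively a formalization of what the paper appeals to informally, and your contrapositive soundness argument via ZKPoK extraction and reduction to SPSEQ-UC EUF-CMA fills a gap the paper leaves implicit. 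The cross-entity binding subtlety you flag---that the delegating ND's identity could leak through $A'=(A,\Phi)$ unless the ND's own credential anonymity is invoked before $\Phi$ is disclosed to the \PSD---is not discussed in the paper at all; the paper handles unlinkability of nested delegations only by the blanket remark that the three privacy properties ``can be repeatedly applied in any order without compromising security.'' In short, your route is a genuine reduction where the paper offers a property checklist; what the paper buys is brevity, what your approach buys is an actual security argument.
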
 
\begin{proof}\vspace{-1.5mm}
$\slap$ ensures robust anonymity, preventing any entity from tracing or inferring user identity or information beyond the required credentials during both issuance/delegation and presentation phases. Malicious verifiers cannot differentiate between users, and this strong anonymity is achieved without relying on a trusted setup. The framework's anonymity is grounded in the knowledge soundness of Zero-Knowledge Proof of Knowledge (ZKPoK), the Decisional Diffie-Hellman (DDH) assumption, and the SPSEQ-UC scheme \cite{fuchsbauer2019structure}, collectively ensuring origin-hiding, conversion-privacy, and derivation-privacy \cite{mir2023practical}. Origin-hiding guarantees indistinguishability of randomized signatures; derivation-privacy ensures extended commitment vectors remain indistinguishable; and conversion-privacy ensures new signatures generated with switched user keys are indistinguishable from fresh signatures. These privacy properties can be repeatedly applied in any order without compromising security.
\end{proof}

\vspace{-2mm}
\begin{mycorollary}\label{cor:unlikability}
$\slap$ provides location privacy for the spectrum access via the unlikability of the credentials formed on the signature and commitment pairs. 
\end{mycorollary}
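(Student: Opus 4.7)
The plan is to derive this corollary essentially as a direct consequence of Lemma~\ref{lem:anonymity&privacy}, by tracking exactly where location data enters the protocol transcript and arguing that every such occurrence is either randomized or bound to an unlinkable credential presentation. First I would recall that in Algorithm~\ref{Alg:query} the coordinates $(l_x,l_y)$ appear only in two places: (i) inside the disclosed attribute set $D$ (AP scenario), or (ii) inside the extended attribute vector $A' = (A,\Phi)$ committed to by the delegated credential (ND scenario). In both cases the location is bound to the credential via the SPSEQ-UC commitment $\overrightarrow{C}$ and presented through the $CredProve/CredVerify$ interactive protocol, rather than attached to any long-term identifier.

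Next I would invoke the three SPSEQ-UC properties already established in Lemma~\ref{lem:anonymity&privacy}: origin-hiding (randomized signatures are indistinguishable from fresh ones), derivation-privacy (extended commitment vectors are indistinguishable), and conversion-privacy (signatures under switched user keys are indistinguishable from fresh signatures). Combined with the freshness of the pseudonym $nym_c$ drawn from $NymGen$ and the ZKPoK used during $CredProve$, these properties imply that two presentations of the same underlying credential, even with overlapping attribute disclosures, are computationally indistinguishable from presentations of independently issued credentials. Consequently, a $\PSD$ (or any eavesdropper) observing a sequence of queries cannot link them to a single client, which is precisely the notion of credential unlinkability needed here.

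From credential unlinkability I would then conclude location privacy in the sense required by the threat model: although a $\PSD$ necessarily learns a coordinate $(l_x,l_y)$ to answer the spectrum query, it cannot correlate that coordinate with any persistent identity, device fingerprint, or previously observed location. In particular, a mobility trace of a single user across multiple queries is indistinguishable from a sequence of queries issued by distinct users at the same locations, so no movement pattern or home-location inference is possible beyond what is leaked by a single atomic query. The ND scenario is handled identically, noting additionally that $\Phi$ is certified inside the delegated credential via the same SPSEQ-UC mechanism, so the derivation-privacy argument carries over verbatim.

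The main obstacle I expect is the ND case, because there the location proof is carried as a certified attribute issued by a specific nearby device and one might worry about linkage via the delegator's identity or via $dk'_{L''}$. I would address this by appealing to the conversion-privacy of SPSEQ-UC, which guarantees that after the delegation the resulting $cred'_c$ is indistinguishable from a freshly issued credential under the client's key, and by observing that the protocol explicitly sets $dk'_{L''} := \perp$ in Algorithm~\ref{Alg:PoLND} Step~5, preventing any further delegation-chain information from being revealed to the verifying $\PSD$. With these observations the corollary follows from Lemma~\ref{lem:anonymity&privacy} without any additional cryptographic assumption.
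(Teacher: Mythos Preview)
Your proposal is correct and follows essentially the same line as the paper's own proof: both reduce location privacy to the unlinkability of SPSEQ-UC signature--commitment pairs via re-randomization and public-key switching, with your version additionally tracing where $(l_x,l_y)$ enters the transcript and explicitly handling the ND delegation case. One minor fix: the assignment $dk'_{L''} := \perp$ occurs at Step~8 of Algorithm~\ref{Alg:PoLND}, not Step~5.
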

\begin{proof}\vspace{-1.5mm}
The location privacy of $\slap$ is ensured by the unlinkability of signature-commitment pairs generated using the SPSEQ-UC scheme \cite{mir2023practical}. This unlinkability is achieved through signature re-randomization and user public key switching, enabling the repeated disclosure of the same commitment-signature pair without linkability. Provided no identifying attributes are included, newly generated signatures are indistinguishable from the originals. This property, formally proven secure under the group model, ensures that credential presentations remain unlinkable to verifiers. 
\end{proof}

\vspace{-1.5mm}
\begin{mylemma}
\label{lem:locationverif}
The $\slap$ framework ensures location verification of the users during spectrum access and queries via (i) the unforgeability of the group signatures and enhanced signal strength measurements; (ii) public key distance-bounding protocol and anonymous delegation of credentials.
\end{mylemma}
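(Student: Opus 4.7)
The plan is to prove the two parts separately, corresponding to the two location-acquisition algorithms, and in each part reduce the soundness of location verification to the underlying cryptographic assumptions plus the physical assumptions of the proximity primitive. For each part I will first fix an adversary $\mathcal{A}$ that successfully cheats location verification with non-negligible probability, then classify the cheating event by attack type (distance fraud, mafia fraud, distance hijacking, relay), and reduce each case either to forging the relevant signature or to breaking the soundness of the distance-bounding channel.

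For part (i), my strategy is to argue that any accepted location proof $\Phi = (\sigma_{AP}, \textbf{m})$ at a $\PSD$ binds a specific coordinate pair $(l_x,l_y)$ and timestamp $TS$ to the client's pseudonym via the BBS signature $\sigma_{AP}$. First, I would invoke the existential unforgeability of the BBS group signature scheme from \cite{tessaro2023revisiting}: if $\mathcal{A}$ could produce a valid $(\sigma_{AP},\textbf{m})$ without controlling an honest AP's signing key, a forgery reduction against the $q$-SDH based EUF-CMA security of BBS yields a contradiction. Next, I would argue that an honest AP only signs when $(l_x,l_y)\in \Delta_c$, where $\Delta_c$ is derived by $\textit{ProxVerif}$ from RSS, RTT and $env_{\text{params}}$; thus any accepted proof that misstates the client's location requires either (a) forging $\sigma_{AP}$, (b) compromising $sk_{AP}$ (ruled out by the trust assumption on APs), or (c) biasing both RSS and RTT beyond the tolerance of $\textit{ProxVerif}$ simultaneously. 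The fusion of RSS with RTT is the standard countermeasure against pure signal-amplification spoofing, since RTT also lower-bounds physical distance by the speed of light, so I would cite the cross-layer verification argument used in the literature on RTT/RSS hybrid proximity.

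For part (ii), I would reduce location soundness to the security of the public-key distance-bounding protocol of \cite{kilincc2016efficient} composed with the anonymous delegation machinery of the $\dac$. The core step is to show: if the ND accepts in $\textit{Sym}\dbp(ss,\textit{th})$, then with overwhelming probability the prover was within distance $\textit{th}$ at the time of the rapid bit exchange. This follows in four sub-cases. Distance fraud is ruled out because a dishonest far prover has to guess each response bit $r_i = a_{2i+c_i-1}$ before receiving $c_i$, which succeeds per round with probability at most $1/2$, giving a $2^{-n}$ soundness error. Mafia fraud and relay attacks are prevented by the RTT check $timer_i \le 2\cdot\textit{th}$ combined with the secrecy of $ss$: the AKA phase binds $ss$ to $pk_c$ and $pk_{ND}$, so a man-in-the-middle cannot derive $a = ss\oplus m$ and must guess responses blindly. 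Distance hijacking is ruled out because $ss$ is tied to the honest prover's key pair, so an adversary cannot piggy-back on an honest distant prover's session to be accepted as near. I would then argue that once Sym$\dbp$ accepts, the ND extends its credential with attributes $A_l = ((l_x,l_y),TS)$ and issues the delegated credential via $IssueCred$; by the unforgeability of the SPSEQ-UC signature underlying $\dac$ (Lemma~\ref{lem:anonymity&privacy}), the $\PSD$'s later $CredVerify$ on $A'$ accepts only if these certified location attributes are intact, which transfers proximity soundness from ND to $\PSD$.

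The main obstacle, I expect, will be the composition step in part (ii): separately the DBP gives proximity soundness to the ND and the $\dac$ gives attribute integrity, but the lemma requires that the $\PSD$ (who never ran the DBP) inherits a sound proximity guarantee. I would handle this by a hybrid argument in which I replace the delegated credential with one where $A_l$ is a correctly verified location attribute, and show that any adversary producing a $\PSD$-accepting transcript with incorrect location either breaks SPSEQ-UC unforgeability (forging the delegated credential off an honest ND), or causes an honest ND to issue a credential for a location the prover was not at (breaking DBP soundness), or corrupts an ND (out of scope by the honest-but-curious assumption on issuing NDs during proof issuance). A secondary subtlety is freshness: the timestamp $TS$ inside $A_l$ and inside the BBS-signed $\textbf{m}$ binds proofs to a short validity window, which I would formalize by requiring the $\PSD$ to reject stale $TS$, thereby preventing replay of old valid proofs from a prior location.
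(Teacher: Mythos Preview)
Your decomposition and reductions mirror the paper's proof: both treat the AP and ND scenarios separately, reduce part (i) to BBS unforgeability under $q$-SDH together with the RSS/RTT proximity check, and reduce part (ii) to the security of the AKA plus symmetric $\dbp$ together with the unforgeability of the SPSEQ-UC delegated credential. Your write-up is actually more explicit than the paper's sketch on the composition step (your hybrid argument transferring soundness from the ND to the $\PSD$ is only implicit in the paper, which just lists the three ingredients).

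One quantitative point to fix in part (ii): your $2^{-n}$ bound for distance fraud is too optimistic. In the OTDB-style protocol the far prover knows $a=ss\oplus m$ before the rapid phase, so for each round $i$ it can inspect $a_{2i-1}$ and $a_{2i}$; when these two bits coincide (probability $1/2$) any early response is correct, and otherwise a guess succeeds with probability $1/2$, giving $3/4$ per round. The paper therefore states the adversary's optimal success probability as $(3/4)^n$, citing the canonical OTDB analysis. A similar $3/4$-per-round argument applies to mafia fraud (pre-query the honest prover with a guessed challenge), so ``must guess responses blindly'' with success $1/2$ is also slightly off. None of this breaks your argument, since $(3/4)^n$ is still negligible, but you should state the correct bound.
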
 
\begin{proof}\vspace{-1.5mm}
In the first scenario, the risk of fraud against the AP is negligible due to robust security measures. Connection to the AP is secured using a broadcasted sequence number transmitted within a short time window (e.g., 100–500 ms), mitigating potential attacks, while proximity is validated through signal strength measurements. The AP’s group signature on the $cred$, $\pol$, and $TS$ verifies that the user is within the AP’s coverage area. The unforgeability of the $\gs$, grounded in the q-SDH assumption and supported by a tighter security proof in the algebraic group model, ensures the integrity of the location verification provided to the $\PSD$ \cite{tessaro2023revisiting}. Additionally, the location proof is non-transferable, as it is cryptographically bound to the current $TS$ and the user's verified credentials. 

In the ND scenario, location verification is ensured through the following mechanisms: (i) The security of the AKA protocol, based on the hardness of the Diffie-Hellman and discrete logarithm problems in the random oracle model \cite{kilincc2016efficient}. (ii) The negligible failure probability of the symmetric $\dbp$ \cite{vaudenay2015private}. Specifically, in the canonical OTDB scheme \cite{kilincc2016efficient}, with $m \in \{0,1\}^{2n}$ during initialization, the optimal probability for an adversary to correctly respond to all challenges is $(\frac{3}{4})^n$, providing strong resistance to distance fraud, mafia fraud, and distance hijacking \cite{kilincc2016efficient}. (iii) The unforgeability and anonymity of $\dac$ delegation. An adversary attempting to forge a new delegated credential with another user's certified $\pol$ must either forge the SPSEQ-UC scheme or compromise the NIZK proof scheme, both of which are provably secure \cite{mir2023practical}.
\end{proof}

\vspace{-3mm}
\begin{mycorollary}\label{cor:counterDoS}
$\slap$ offers a counter-DoS mechanism for spectrum access, usage notification, and obtaining CRN services via public-key time-lock puzzles.
\end{mycorollary}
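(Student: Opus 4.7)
The plan is to reduce the counter-DoS guarantee of $\slap$ to the asymmetric computational gap between puzzle solution and verification, combined with the non-parallelizability of RSA-based repeated squaring. First, I would observe from Algorithm \ref{Alg:query} that a $\PSD$ or CRN server commits resources (database update or access grant) only after both $CredVerify$ and $Sol.Verify$ succeed; hence every accepted request costs the submitter at least one full puzzle solution, whereas the server pays only a single modular exponentiation for verification together with a standard credential check. Next, I would invoke the sequential squaring assumption underlying $\tlp$ \cite{rivest1996time, jerschow2010offline}: computing $c_1 = m^{2^\kappa} \bmod n$ requires $\kappa$ sequential modular squarings that cannot be meaningfully parallelized, so an adversary controlling any number of machines cannot push the per-request wall-clock cost below $T = \kappa/S$. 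Consequently, an adversary with total computational budget $B$ can honor at most $\mathcal{O}(B/\kappa)$ requests per reference-time interval, establishing a quantitative DoS rate bound.

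The second step is to argue that neither replay nor precomputation can shortcut this bound. Each puzzle is generated with modulus $n$ and exponent $\Tilde{e}$ specific to the target server, so a solution computed for one server cannot be redirected to another; binding the solution to a freshly pseudonymized credential whose $\pol$ attribute embeds the current $TS$ (Lemma \ref{lem:locationverif}) further prevents amortizing one solved puzzle across many identities or reusing stockpiled solutions after the timestamp expires. Moreover, because $Puzzle.Gen$ is executed offline in batches of varying difficulty and distributed to clients according to the device attributes carried in their credentials, the defender incurs no online generation cost and can adaptively tune $\kappa$ to the requester's hardware class, narrowing the gap that an attacker with faster hardware could otherwise exploit. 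The anonymity and unforgeability of the credential layer (Lemma \ref{lem:anonymity&privacy} and Corollary \ref{cor:unlikability}) preclude any protocol-level shortcut that would let an adversary reach the server without first consuming its assigned puzzle.

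The main obstacle I anticipate is formalizing the non-parallelizability claim in a tight concrete-security manner: the bound depends on the sequential squaring assumption and on an honest model of the reference squarings-per-second rate $S$, both of which are somewhat non-standard compared to purely game-based reductions. My plan is to keep the quantitative bound explicitly conditional on the sequential squaring assumption, absorb hardware-induced slack by per-device-class calibration of $\kappa$ via the attribute information embedded in the $\dac$, and defer any purely cryptographic bypass (e.g., forging a puzzle solution without solving it) to the hardness of inverting $\Tilde{e}$ mod $n$ without knowledge of $\phi(n)$, which is the standard RSA-style assumption already underpinning $Puzzle.Gen$.
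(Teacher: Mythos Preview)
Your proposal is correct and rests on the same cryptographic anchor as the paper's proof---the sequential-squaring hardness of Rivest's construction and the infeasibility of recovering $\phi(n)$ without factoring $n$---so at the core the two arguments coincide. Where they diverge is in scope and emphasis: the paper's proof is terse and purely cryptographic, stating that $c$ cannot be derived without $\kappa$ squarings and that reducing $\Tilde{e}$ to $e$ is as hard as factoring, whereas you go further and actually argue the \emph{counter-DoS} claim by quantifying the solver/verifier asymmetry, bounding the adversary's accepted-request rate by $\mathcal{O}(B/\kappa)$, and addressing replay and precomputation through the $TS$-bound credential and per-server modulus. That additional layer is genuinely useful and arguably closer to what the corollary asserts than the paper's own proof, which never makes the asymmetry or rate bound explicit.

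One point you omit that the paper highlights: the $\PSD$ must not disclose multiple $\Tilde{e}$ values under the same RSA key pair, since two such exponents leak enough structure to factor $n$ and collapse the whole puzzle. This operational caveat is specific to the Jerschow--Mauve variant used here and should appear in your argument as a precondition on $Puzzle.Gen$; without it, your reduction to ``hardness of inverting $\Tilde{e}$ mod $n$'' can be bypassed by an adversary who collects two puzzles from the same server key. Adding that one sentence would make your treatment strictly more complete than the paper's.
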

\begin{proof}\vspace{-1.5mm}
The security of the TLP is directly grounded in Rivest's construction \cite{rivest1996time}, which relies on the hardness of the integer factorization problem and the computational properties of modular exponentiation with a power-of-two exponent. Specifically, deriving $c$ without performing $\kappa$ modular exponentiation operations during puzzle-solving (step 18 of Algorithm \ref{Alg:query}) is computationally infeasible for an adversary. Furthermore, reducing $\Tilde{e}$ to $e$ and computing $\phi(n)$ is provable as hard as factoring $n$ into its two large prime factors. To maintain security, the $\PSD$ must avoid disclosing multiple $\Tilde{e}$ values associated with the same key pair, as such disclosure would enable efficient factorization of $n$ and compromise the scheme. 
\end{proof}

\section{Performance Evaluation}
\label{sec:performance}

\begin{table*}[hbt!]
    \centering
    \renewcommand{\arraystretch}{1.2} 
    \setlength{\tabcolsep}{2pt}
    \begin{tabular}{|@{}c@{}|@{}c@{}|@{}c@{}|@{}c@{}|@{}c@{}|}\hline
    \textbf{Phase} & \textbf{Entity} & \textbf{Analytical Computational Cost} & \textbf{Empirical Cost} & \textbf{Communication Overhead} \\\hline
    \multirow{2}{*}{\textbf{PoL.AP}}& \textit{Client} & $((k+11)\mathbb{G}_1+3\mathbb{G}_2+\mathbb{G}_1^2)+\mathbb{G}_1^{|D|}++1P+3\mathbb{G}_T+(\sum_{i=1}^{|D|}(\mathbb{G}_1^{u_i}+\mathbb{G}_1))$& $20.17~\textit{ms}$ & $((k+8)|\mathbb{G}_1|+2|\mathbb{G}_2|+3|\mathbb{Z}_p|)$\\\cline{2-4} 
    & \textit{AP}& $2E^k+E^2+5E+\mathbb{G}_2^{|S|}+3\mathbb{G}_T+9\mathbb{G}_1+O(1)+\sum_{i=1}^{|D|}(\mathbb{G}_2^{|S-d_i|}+\mathbb{G}_2))$ & $61.26~\textit{ms}$ & $|TS|+|(l_x,l_y)| = 2008\textit{B}$ \\\hline\hline
    \multirow{3}{*}{\textbf{PoL.ND}} & \textit{Client} & $((k+3)\mathbb{G}_1+\mathbb{G}_2+\mathbb{G}_1^2)+E_M+H+rnd+O(1)+(\mathbb{G}_1^{|D|})+(\sum_{i=1}^{|D|}(\mathbb{G}_1^{u_i}+\mathbb{G}_1))$ & $31.75~\textit{ms}$ & $(3k+8)|\mathbb{G}_1|+4|\mathbb{G}_2|$\\ \cline{2-4}
    & \multirow{2}{*}{\textit{ND}}& $2E^k+E^2+5E+\mathbb{G}_2^{|S|}++ 3\mathbb{G}_1^2+2\mathbb{G}_1^n+\mathbb{G}_2^2)+(k+5)\mathbb{G}_1+\mathbb{G}_2$ & \multirow{2}{*}{$78.05~\textit{ms}$} & $+|TS|+|(l_x,l_y)|$\\ 
    & & $+\sum_{i=1}^{|D|}(\mathbb{G}_2^{|S-d_i|}+\mathbb{G}_2))+E_M+H+rnd+O(1)$ && $+(k+1)|\mathbb{Z}_p| = 1856\textit{B}$\\\hline\hline
    \textbf{Spectrum} & \textit{Client} &$((k+3)\mathbb{G}_1+\mathbb{G}_2+\mathbb{G}_1^2)+(\mathbb{G}_1^{|D|})+(\sum_{i=1}^{|D|}(\mathbb{G}_1^{u_i}+\mathbb{G}_1))+O(1)$ & $17.22~\textit{ms}$ & $(k+5)|\mathbb{G}_1|+|\mathbb{G}_2|+|\mathbb{Z}_p|+$ \\\cline{2-4} 
    \textbf{Query}& \textit{PSD} & ${2E^k+E^2+5E+\mathbb{G}_2^{|S|}+1P+3\mathbb{G}_T+2\mathbb{G}_2+8\mathbb{G}_1 + O(1)+\sum_{i=1}^{|D|}(\mathbb{G}_2^{|S-d_i|}+\mathbb{G}_2))}$ & $61.39~\textit{ms}$ & ${|TS|+|(l_x,l_y)|+|\beta| = 3080\textit{B}}$\\ \hline
    \textbf{Notify/}& \multirow{2}{*}{\textit{Client}} & $\kappa\times S_q + E_M+(k+3)\mathbb{G}_1+2\mathbb{G}_2+\mathbb{G}_1^2)$& $17.22~\textit{ms}$ & $(k+5)|\mathbb{G}_1|+|\mathbb{G}_2|+|\mathbb{Z}_p|$ \\
    \textbf{Service} & & $+(\mathbb{G}_1^{|D|})+(\sum_{i=1}^{|D|}(\mathbb{G}_1^{u_i}+\mathbb{G}_1)$& $+\kappa\times S_q$ & $|m|+|TS|+|\Pi|+|\psi|$\\\cline{2-4}
    \textbf{Request} & \textit{PSD}/ & {$2E^k+E^2+5E+\mathbb{G}_2^{|S|}+\mathbb{G}_2+\sum_{i=1}^{|D|}(\mathbb{G}_2^{|S-d_i|}+\mathbb{G}_2))$} & {$59.01~\textit{ms}$} & $=2304\textit{B}$\\\hline    
    \end{tabular}
    \begin{tablenotes}
       \item \textbf{Computations:} $\mathbb{G}_1$, $\mathbb{G}_2$, and $\mathbb{G}_T$ denote exponentiation in the respective groups. $E^k$ represents a $k$-pairing product, where $k=1$ corresponds to a single pairing operation; $P$ denotes pairing over the BN-256 curve. $E_M$ represents modular multiplication ($n=2048$). $rnd$ denotes random string selection, $H$ is a cryptographically secure hash function (SHA-256), and $S_q$ represents the repeated squaring time to solve a puzzle. $O(1)$ signifies signal transmission and internet communication time, typically in the microsecond range. Let $D = (d_i)_{i \in [k]}$ and $S = \bigcup_i d_i$ for all $i \in [k]$, where $k$ is the delegation level ($L=2$ in this scheme), and $(d_i, u_i)$ denotes disclosed and undisclosed attributes at level $i$. $\kappa$ represents puzzle difficulty.   \textbf{Communication:} Bits and bytes are denoted by $b$ and $B$, respectively. Group sizes are $|\mathbb{G}_1| = |\mathbb{Z}_p| = 256b$, $|\mathbb{G}_2| = 512b$, and $|\mathbb{G}_T| = 3072b$, with modular arithmetic over $n=2048$. Messages $|m| < 256B$, timestamps $|TS|$ are $8B$ (on a 64-bit Unix system), high-precision location coordinates are $16B$, and spectrum availability information $|\beta|$ (based on FCC raw data) is approximately $560~B$.
    \end{tablenotes}\vspace{-1mm}
    \caption{Computational Costs and Communication Overhead of $\slap$ Framework}\vspace{-6.5mm}
    \label{tab:Computational}
\end{table*}


\subsection{Metrics, Selection Rationale, and Configurations} \label{subsec:Configuration}
\noindent \textbf{Evaluation Metrics and Rationale:} 
We conduct analytical and empirical evaluations of the $\slap$ framework, assessing its computational costs and communication overhead across all phases and employed primitives, including $\dac$, $\gs$, $\dbp$, and $\tlp$. As no existing solutions offer a similarly comprehensive set of features, a direct performance comparison is not feasible. Instead, we provide a detailed performance analysis of $\slap$ across key metrics to evaluate its feasibility and practicality. Additionally, we present a qualitative and analytical comparison with selected schemes addressing subsets of these features in the context of spectrum query to SAS. The evaluation is structured as follows.

\noindent\textbf{Hardware, Software Libraries, and Parameters:} 
Our experiments were conducted on a desktop with an $11^{th}$ Gen Intel Core $\textit{i9-11900K} @ 3.50~GHz$, $64~\text{GiB}$ RAM, $1~\text{TB}$ SSD, running Ubuntu $22.04.4~\text{LTS}$. The implementation utilized libraries and tools such as \textit{DAC-from-EQS}\footnote{\url{https://github.com/mir-omid/DAC-from-EQS}},  
\textit{bbs-node reference}\footnote{\url{https://github.com/microsoft/bbs-node-reference/tree/main}}, \textit{time-lock-puzzle}\footnote{\url{https://github.com/pmuens/time-lock-puzzle}}, and \textit{OpenSSL}\footnote{\url{https://www.openssl.org/}}. These were used for implementing cryptographic primitives, including hash functions, modular arithmetic, exponentiation, and core $\slap$ components.  The setup included $\textit{SHA-256}$ for hashing, \cite{fuchsbauer2019structure} for set commitments, \cite{mir2023practical} for SPSEQ-UC, BN256 curve for binding and ECC, Schnorr-style ZKP with Damgard's technique for $\dac$ \cite{mir2023practical}, and NIZKs derived via the Fiat-Shamir heuristic, achieving approximately 100-bit security.

\vspace{-1mm}
\subsection{Experimental Results} \label{subsec:Performance}
\vspace{-1mm}
The analytical and empirical evaluation of cryptographic overhead, computational costs, and communication overhead for each phase of the $\slap$ framework is summarized in TABLE \ref{tab:Computational} and detailed below:

\noindent\textbf{Cryptographic Overhead:} 
To prove a credential, the user randomizes their $cred$ and $nym$ and employs a ZKPoK on the secret key $sk$ and randomness $aux$ to generate a new randomized $nym$ along with a subset of attributes $D$ using a set commitment scheme. Signature conversion, signature representation adjustment, and adaptation for a new set commitment take approximately $2 \textit{ms}$, $5 \textit{ms}$, and $13 \textit{ms}$, respectively.  
On commodity hardware, solving puzzles at difficulty levels $\kappa$ (number of squarings) set to $10^3$, $15 \times 10^3$, $50 \times 10^3$, $10^5$, and $10^6$ requires $3.9 \textit{ms}$, $56.31 \textit{ms}$, $194 \textit{ms}$, $784 \textit{ms}$, and $3.786 \textit{s}$, respectively. Verifying a puzzle solution, which involves RSA decryption (modular exponentiation), takes about $797 \mu s$. Group signing and verification are completed in $2.26 \textit{ms}$ and $3.17 \textit{ms}$, respectively, with batch verification reducing costs on the $\PSD$ side.  
In the employed $\dbp$, the AKA requires one ECC multiplication ($0.612 \textit{ms}$), one hashing ($0.35 \textit{ms}$), and random string selection ($0.045 \textit{ms}$). Rapid bit exchange occurs on a nanosecond scale, with a $\delta$ distance fraud probability corresponding to changes around $100 \textit{cm}$, negligible compared to other protocol aspects. For $ProxVerify()$, performed by the AP using signal strength and RTT techniques, the process is considered to take approximately $1$-$10 \textit{ms}$.


\noindent\textbf{Computational Costs:} 
{\em (i)} \textit{PoL.AP Phase:} The client proves anonymity and verifies the group signature, while the AP validates credentials, executes the $ProxVerify$ algorithm, and generates a group signature for the location proof.  
{\em (ii)} \textit{PoL.ND Phase:} This phase involves interactive protocols between two users, including credential proof and verification, key agreement, symmetric $\dbp$, and credential delegation/receipt with the location proof as a certified attribute.  
{\em (iii)} \textit{Spectrum Query Phase}: The user submits a valid location proof (from the AP or ND) when querying a $\PSD$ for spectrum availability at specific coordinates and timestamps. The $\PSD$ verifies credentials, checks the location proof, and provides spectrum information along with a public key puzzle tailored to the user’s attributes.  
{\em (iv)} \textit{Notifying Spectrum Usage or Requesting CRN Services Phase:} The user solves the $\PSD$ or CRN server's public key puzzle, proves credentials, and submits the solution. The $\PSD$/server verifies the solution before granting access to services or updating the database.

\noindent\textbf{Communication Overhead:} 
The communication complexity and data sizes for each phase are summarized in TABLE \ref{tab:Computational}. In our scheme, all attributes are assumed to have uniform size. The credential includes $|cred|+|sk|+|nym|$ within the set commitment and SPSEQ-UC schemes, maintaining a constant size independent of the number of attributes, calculated as ${4|\mathbb{G}_1|+|\mathbb{G}_2|+|\mathbb{Z}_p|}$, resulting in a credential size of $1792~b$. The size of $\overrightarrow{C}$ corresponds to the delegation level ($L=2$), with communication complexity increasing linearly with the number of attributes and delegations. Using publicly available raw database data from the FCC\footnote{\url{https://enterpriseefiling.fcc.gov/dataentry/public/tv/lmsDatabase.html}}, we estimated each database block to contain approximately $560~\text{bytes}$ of information, supplemented with synthetic data for evaluation purposes.

\begin{table*}[ht]
    \centering
    \renewcommand{\arraystretch}{1.2} 
    \setlength{\tabcolsep}{2pt}
    \begin{tabular}{|{l}|{c}|{c}|{c}|{c}|{c}||{c}|{c}|{c}|{c}|{c}|}\hline
         \multirow{2}{*}{\textbf{Scheme}} & \multicolumn{5}{c||}{\textbf{Features}} & \multicolumn{4}{c|}{\textbf{Delay}} & \textbf{Total} \\ \cline{2-10}
         & \textbf{Setting} & \textbf{Loc.Privacy} & \textbf{Anonym} & \textbf{Loc.Verification} & \textbf{Counter-DoS} & \textbf{SU} & \textbf{PSD} & \textbf{E2E} & \textbf{PoL} & \textbf{Communication}  \\\hline\hline
         \textit{Troja et al}\cite{troja2014leveraging} & \textit{1-DB} &  \textit{Peer-to-Peer} & \boldsymbol{\cmark} & \boldsymbol{\xmark}& \boldsymbol{\xmark} & $1650$ \textit{ms} & $11760$ \textit{ms} & $13410$ \textit{ms} & \boldsymbol{\xmark} & $12$ \textit{MB} \\\hline
         \textit{Li et al}\cite{li2015privacy} & \textit{1-DB} & \textit{Pseudo-ID} & \boldsymbol{\xmark} & \textit{WiFi AP+Loc.Server} & \boldsymbol{\xmark} & \boldsymbol{\xmark} & \boldsymbol{\xmark} & \boldsymbol{\xmark} & $210$ \textit{ms} & \boldsymbol{\xmark} \\\hline
         \textit{Xin et al}\cite{xin2016privacy} &\textit{1-DB} & \textit{PIR} & \boldsymbol{\xmark} & \textit{WiFi AP+QRA} &\boldsymbol{\xmark} & $292.8$ \textit{ms} & $142.7$ \textit{ms} & $407.4$ \textit{ms} & $430.1$ \textit{ms} & 325\textit{KB}\\\hline
         \textit{LP-Chor}\cite{grissa2019location} & \textit{$\ell$-DB} & \textit{PIR} &\boldsymbol{\xmark} & \boldsymbol{\xmark} & \boldsymbol{\xmark} & $7.7$ \textit{ms} & $480$ \textit{ms} & $620$ \textit{ms} & \boldsymbol{\xmark} & $753$ \textit{KB}\\\hline
         \textit{LP-Goldberg}\cite{grissa2019location} & \textit{$\ell$-DB} & \textit{PIR} &\boldsymbol{\xmark} & \boldsymbol{\xmark} & \boldsymbol{\xmark} & $320$ \textit{ms} & $1210$ \textit{ms} & $1780$ \textit{ms} & \boldsymbol{\xmark} & $6$ \textit{MB}\\\hline
         \textit{RAID-LP-Chor}\cite{grissa2019location} & \textit{$\ell$-DB} & \textit{PIR} &\boldsymbol{\xmark} & \boldsymbol{\xmark} & \boldsymbol{\xmark} & 
         $0.4$ \textit{ms} & $22$ \textit{ms} & $210$ \textit{ms} & \boldsymbol{\xmark}& $125$ \textit{KB}\\\hline
         Zeng et al\cite{zeng2019efficient} & \textit{1-DB} & \textit{BS+ECC} & \textit{PseudoID} &\boldsymbol{\xmark} &\boldsymbol{\xmark}  & $87$ \textit{ms} & $27$ \textit{ms} & $135$ \textit{ms} & \boldsymbol{\xmark} & $1.24$ \textit{KB}\\ \hline
         \textit{TrustSAS}\cite{grissa2021anonymous} &\textit{$\ell$-DB} &\textit{PIR} & \textit{EPID} & \boldsymbol{\xmark} & \boldsymbol{\xmark} & 
         $329.4$ \textit{ms} & $324.6$ \textit{ms} & $4954$ \textit{ms} & \boldsymbol{\xmark} & $1.25$ \textit{MB} \\\hline
         \textit{PACDoSQ}\cite{darzi2024privacy} & \textit{$\ell$-DB} &\textit{PIR}& \textit{Tor} & \boldsymbol{\xmark} & \textit{HBP} & $28.1$ \textit{ms} & $199$ \textit{ms} & $1373.6$ \textit{ms} & \boldsymbol{\xmark} & $605.92$ \textit{KB}\\\hline
         \multirow{2}{*}{$\slap$} & \multirow{2}{*}{\textit{1-DB}} & \multirow{2}{*}{$\dac$} & \multirow{2}{*}{$\dac$} & \textit{WiFi AP+G.Sig} & \multirow{2}{*}{\textit{TLP}} &\multirow{2}{*}{$17.22$ \textit{ms}}&\multirow{2}{*}{$61.39$ \textit{ms}}&\multirow{2}{*}{$78.61$ \textit{ms}}&$107.17$ \textit{ms}& \multirow{2}{*}{$3.08$\textit{KB}}\\ 
         &&&&$\dbp$+$\dac$&&&&&$109.8$ \textit{ms} &\\\hline
    \end{tabular}
    \begin{tablenotes}
       \item \textbf{Libraries:} Virtual Machines running Ubuntu simulated PIR costs, using the \textit{percy++} library\footnote{\url{https://percy.sourceforge.net/}} for multi-server PIR, the \textit{Open Quantum-Safe} library\footnote{\url{https://openquantumsafe.org/}} for PQC primitives, and \textit{OpenSSL} for cryptographic operations and arithmetic. \textbf{Variables:} We consider six databases for multi-DB schemes with $|DB| = 560 \textit{MB}$ and 400 rows/columns as described in \cite{xin2016privacy}. Key terms include \textit{BS} (base station), \textit{HBP} (hash-based puzzles), \textit{G.Sig} (group signature), \textit{QRA} (quadratic residue assumption), and \textit{EPID} (enhanced privacy ID based on direct anonymous attestation). 
    \end{tablenotes}\vspace{-1mm}
    \caption{Qualitative and Analytical Comparison with Existing Location Privacy Schemes}\vspace{-6mm}
    \label{tab:QuantitativeComparison}
\end{table*}

\noindent\textbf{Comparison with SOTA:} 
We perform a qualitative and analytical comparison of the achieved features with other state-of-the-art location privacy schemes, as detailed in TABLE \ref{tab:QuantitativeComparison}. For a fair evaluation, we consider spectrum query costs from both the client’s and $\PSD$'s perspectives, system communication overhead, and end-to-end delay for retrieving a single block from the geo-location databases as a measure of scalability. As shown in TABLE \ref{tab:QuantitativeComparison}, our approach delivers all necessary features for secure, location-private, and anonymous spectrum access, while offering architecture-flexible and efficient location verification with the lowest end-to-end delay and minimal communication burden on the system.

\vspace{-2mm}
\section{Conclusion}
\label{sec:conclusion}
The increasing demand for communication resources has driven the development of SAS, but regulatory requirements for disclosing sensitive user data raise privacy and security concerns, including the need for robust location verification and resilience against DoS attacks. To address these challenges, we proposed $\slap$, a framework ensuring strong location privacy, full anonymity during spectrum queries, and adaptive dual-scenario location verification while integrating TLP-based counter-DoS mechanisms. We formally proved its security and demonstrated its efficiency and scalability through comprehensive evaluations. 
\vspace{-2mm}

\section*{Acknowledgment}
This work is supported by the National Science Foundation NSF-SNSF 2444615. 

\vspace{-4mm}


\bibliographystyle{IEEEtran}
\bibliography{SalehRef}


\end{document}